\documentclass[letterpaper,conference]{ieeeconf}

\usepackage{cite}
\usepackage{amsmath,amssymb,amsfonts}
\usepackage{algorithmic}
\usepackage{booktabs}
\usepackage{graphicx}
\usepackage{textcomp}
\usepackage{xcolor}
\newtheorem{theorem}{Theorem}

\title{Layered construction of Message-Wise Unequal Error Protection Codes}

\author{Qiming Lu, Shan Lu, Takaya Yamazato,\\
Dept. of Information and Communication Engineering, Nagoya University, Nagoya 464-8063, Japan\\
lu.qiming.h0@s.mail.nagoya-u.ac.jp,~shan.lu.jp@ieee.org,~yamazato@ieee.org
}

\begin{document}

\maketitle

\begin{abstract}
Conventional communication systems are mainly designed to reduce error rates and increase transmission rates, and therefore usually provide uniform protection to all transmitted messages. However, in intent-oriented applications, different messages may have different semantic meanings and importance levels, requiring different levels of reliability. This paper proposes a layered construction of message-wise unequal error protection (UEP) codes for short-blocklength communication. Instead of appending an explicit protection tag to each codeword, the proposed method embeds the protection structure directly into the Hamming-distance structure of the codebook. By assigning larger minimum intra-level distances to higher-importance message groups and imposing suitable inter-level distance constraints, the proposed codebook provides differentiated error-correction capabilities while enabling reliable importance-level classification at the receiver. Theoretical conditions for correct group classification are derived, and simulations over AWGN and VLC-ISI channels show that the proposed scheme improves BER performance and group classification accuracy compared with a tag-based ECC baseline.
\end{abstract}

\vspace{-5mm}
\section{Introduction}
With the rapid development of intelligent driving technologies, vehicle-to-everything (V2X) communication has become a key technology for intelligent transportation systems and autonomous driving. By enabling information exchange among vehicles, infrastructure, and other road users, V2X communication improves driving safety and traffic efficiency. In addition to conventional radio-frequency solutions, visible light communication (VLC) has been considered a promising complementary technology~\cite{yamazato,Asaoka2026}, owing to its low cost and seamless integration with vehicular lighting systems.

In such scenarios, intent-oriented communication has attracted increasing attention, where messages have heterogeneous importance levels. However, conventional systems typically provide uniform error protection, which is insufficient for such requirements when safety-critical and non-critical messages coexist. This motivates the use of message-wise unequal error protection (UEP) to provide differentiated reliability under short-blocklength constraints.

UEP has been studied for more than two decades~\cite{699346,7848728,7334457}. Most early studies focused on bit-wise UEP, where unequal protection is applied to specific symbols or bit positions within a single codeword. For example, existing work~\cite{6978587} has demonstrated the hardware feasibility of bit-wise UEP.
However, such approaches remain limited to symbol-wise protection inside individual codewords. More recently, several studies~\cite{5319734,9525108} have extended the concept of UEP to the message-wise, where different messages are assigned different reliability requirements.

The message-wise UEP is different from rate-compatible coding~\cite{Hagenauer1988,Hou2014,Hou2016}. 
Rate-compatible codes obtain multiple code rates from a mother code through puncturing, shortening, or incremental redundancy, and are mainly used for channel-adaptive rate control. 
For a selected rate, all messages in the transmitted block are protected by the corresponding coding structure, and no explicit separation among message groups with different protection capabilities is required. 
In contrast, the proposed message-wise UEP maps messages with different priority levels to different codeword groups and explicitly designs both intra-group and inter-group Hamming distances. Thus, the unequal protection capability is embedded in the geometry of the codebook itself, rather than being realized by selecting different rates from a rate-compatible code family.

Existing message-wise UEP schemes have focused on long-blocklength designs based on LDPC codes, which can achieve near-capacity performance and support theoretical performance analysis~\cite{5715839,10571004}. However, in low-speed VLC systems, the high encoding and decoding complexity of LDPC codes may introduce significant latency, making them less suitable for low-latency VLC scenarios. In contrast, short block codes provide an attractive alternative because of their hardware simplicity, low latency, and multi-bit error-correction capability, which are also consistent with VLC reference standards.

Tag-based UEP schemes have also been investigated~\cite{mark-UEP}, where a protection-level indicator is appended to each codeword and combined with an error-correcting code. However, such approaches have a key weakness: If the tag is decoded incorrectly, the receiver may use the wrong decoding rule. This problem is more serious in short-blocklength scenarios, because the tag adds extra bits and its reliability directly affects the overall performance.

To address these limitations, this paper proposes a layered construction method for message-wise unequal error protection codes for short-blocklength intent-oriented communication. 
By assigning different minimum intra-level distances to messages with different importance levels and imposing suitable inter-level distance constraints, the proposed codebook realizes differentiated error-correction capability while preserving reliable importance-level classification at the receiver. We further derive theoretical feasibility conditions for correct group classification under bounded errors and validate the proposed design through simulations over both additive white Gaussian noise (AWGN) and VLC inter-symbol interference (VLC-ISI) channels. The results show that the proposed tag-free layered UEP scheme achieves lower bit error rates and higher grouping accuracy than the conventional indicator-plus-ECC approach, especially under short-blocklength and interference-dominated VLC scenarios.


\vspace{-3mm}
\section{System construction}

In this section, we describe the design requirements for the proposed layered UEP codebook. This study focuses on command-level intent transmission\cite{9679803}, where each command is represented as a discrete message and assigned an importance level. The codebook is designed according to two principles. First, the minimum Hamming distance within each importance level determines the error-correction capability of that level. Second, the Hamming distance between different levels must be sufficiently large to enable reliable importance-level classification at the receiver. Therefore, higher-importance message groups are assigned larger intra-level minimum distances, while different groups are separated by suitable inter-level distances.

\subsection{Definition of m-level unequal error protection}\label{AA}
First, we construct a message set $\mathcal{M}$ that contains all the messages to be transmitted:
\begin{equation}
    \mathcal{M} =\bigcup_{x=1}^{m}\mathcal{M}_{x}, \hspace{1em}\mathcal{M}_{x}\bigcap \mathcal{M}_{y}=\oslash ,x \ne y 
\end{equation}
The class index $x$ represents the protection priority, where a larger $x$ corresponds to a higher protection level.Each message group $\mathcal{M}_{x}$ is mapped to a binary codeword group $C_{x}\subseteq \left \{ 0,1 \right \}^{n}$ through an injective encoder:
\begin{equation}
    f_{x}: \mathcal{M}_{x}\to C_{x} 
\end{equation}
where $\left | C_{x} \right | = N_{x}$ and $n$ is the codeword length. The overall layered codebook is therefore given by
\begin{equation}
    C=\left \{ C_{1},C_{2},\cdots ,C_{m} \right \} 
\end{equation}

An $m$-level unequal error protection  code is represented by the following formula:
\[
C_x = \left\{ c^{(x)}_1, c^{(x)}_2, \ldots, c^{(x)}_{N_x} \right\},
\qquad
c^{(x)}_i \in \{0,1\}^n .
\]
Here, \(c^{(x)}_i\) denotes the \(i\)-th codeword in group \(C_x\), and \(N_x\) is the number of codewords in that group. 

\subsubsection{minimum intra-group Hamming distance} For each group $C_{x}$, we define the \textit{minimum intra-group Hamming distance} as:
\begin{align}
    \begin{split}
       d_{\min}(C_x) = \min_{\substack{u,v \in C_x \\ u \neq v}} d(u,v)
    \end{split}
\end{align}
where $d(\cdot ,\cdot)$ denotes the Hamming distance.

The guaranteed error-correction capability of group \(C_{x}\) is defined as
\[
t_x = \left\lfloor \frac{d_{\min}(C_x)-1}{2} \right\rfloor .
\]
Equivalently, to guarantee correction of up to \(t_x\) errors, the intra-level minimum distance should satisfy
\[
d_{\min}(C_x) \geq 2t_x + 1 .
\]
Hence, a larger minimum intra-group distance implies a stronger protection capability for that group.

To realize message-wise unequal error protection,the groups are designed so that $t_{1} \le t_{2} \le \dots \le t_{m}.
    \label{eq2}$
That is, higher-priority message classes are assigned stronger protection.

\subsubsection{inter-level distance} 
The intra-level minimum distance \(d_{\min}(C_x)\) determines the error-correction capability within each importance level. However, for the proposed \textit{tag-free scheme}, the receiver must also be able to identify the importance level of a received vector without relying on an explicit protection indicator. Therefore, sufficient Hamming-distance separation is required between different codeword groups.

For two different groups \(C_p\) and \(C_q\), the inter-level distance is defined as
\begin{equation}
d_{pq} = \min_{\substack{u\in C_p, ~v\in C_q}}
d(u,v),
\qquad p\neq q .
\label{def:interleveldis}
\end{equation}
This value represents the minimum Hamming distance between any codeword in \(C_p\) and any codeword in \(C_q\). A larger inter-level distance reduces the probability that a received vector is incorrectly classified into another importance level.

To enable reliable nearest-group classification under bounded errors, we impose the following inter-level distance constraint:
\begin{equation}
d_{pq} \geq 2\max\{t_p,t_q\}+1,
\qquad \forall p\neq q .
\end{equation}
This condition is used as a design requirement for the layered codebook. Intuitively, it ensures that different importance levels are sufficiently separated in the Hamming-distance space. The theoretical justification of this condition is provided in Section~III. A stronger but convenient construction rule is
\begin{equation}
d_{pq} \geq 
\max\left\{
d_{\min}(C_p), d_{\min}(C_q)
\right\},
\qquad \forall p\neq q .
\end{equation}
Since \(d_{\min}(C_x)\geq 2t_x+1\), this stronger rule directly implies the previous inter-level distance condition. In this paper, this rule is adopted for the proof-of-concept codebook construction.

Beyond the minimum separation requirement, the inter-level distance can also be used to reflect the difference in message importance. Specifically, when possible, message groups with larger importance differences are assigned larger inter-level distances. In this way, the inter-level design serves two purposes: it supports reliable importance-level classification at the receiver and preserves the priority structure of the message set in the Hamming-distance space.



\subsection{Decoding}
Assume that a codeword \(c_i^{(A)}\in C_A\) is transmitted. 
After hard decision at the receiver, the received vector can be written as
\begin{equation}
    r = c_i^{(A)} \oplus e,
\end{equation}
where \(e\in\{0,1\}^n\) denotes the binary error vector and \(\oplus\) represents bitwise modulo-2 addition. The number of bit errors is given by
\begin{equation}
    w = \mathrm{wt}(e) = d(r,c_i^{(A)}),
\end{equation}
where \(\mathrm{wt}(\cdot)\) denotes the Hamming weight.

The receiver applies a nearest-group decoding rule as shown in Fig.~\ref{fig:Decoding process}. 
For each group \(C_x\), the minimum Hamming distance between the received vector \(r\) and the codewords in \(C_x\) is computed as
\begin{equation}
    d_x^{\min} = \min_{c\in C_x} d(r,c).
    \label{eq:group_min_distance}
\end{equation}
The estimated importance level is then determined by selecting the group with the smallest distance:
\begin{equation}
    \hat{s}= \arg \min_{1\leq x \leq m} d_x^{\min}.
    \label{eq:nearest_group_decoding}
\end{equation}
Here, \(\hat{s}\) denotes the estimated importance level of the received vector. If multiple groups have the same minimum distance, a predetermined tie-breaking rule can be used. Under the distance conditions discussed in Section~III, such ambiguity does not occur when the number of errors is within the correction capability of the transmitted group.

After the importance level is identified, the receiver estimates the transmitted codeword within the selected group as
\begin{equation}
    \hat{c}= \arg \min_{c\in C_{\hat{s}}} d(r,c).
\end{equation}
The decoded message is then obtained by applying the inverse mapping of the corresponding encoder:
\begin{equation}
    \hat{m}=f_{\hat{s}}^{-1}(\hat{c}).
\end{equation}

In this proof-of-concept study, the nearest-group decoder is implemented by \textit{exhaustive search} over the codebook. Although this approach directly verifies the feasibility of the proposed layered construction, its complexity increases with the total number of codewords. Therefore, the development of structured codebooks and low-complexity decoding algorithms remains an important direction for future work.
    \vspace{-4mm}
\begin{figure}[!h]
    \centering
    \includegraphics[width=\linewidth]{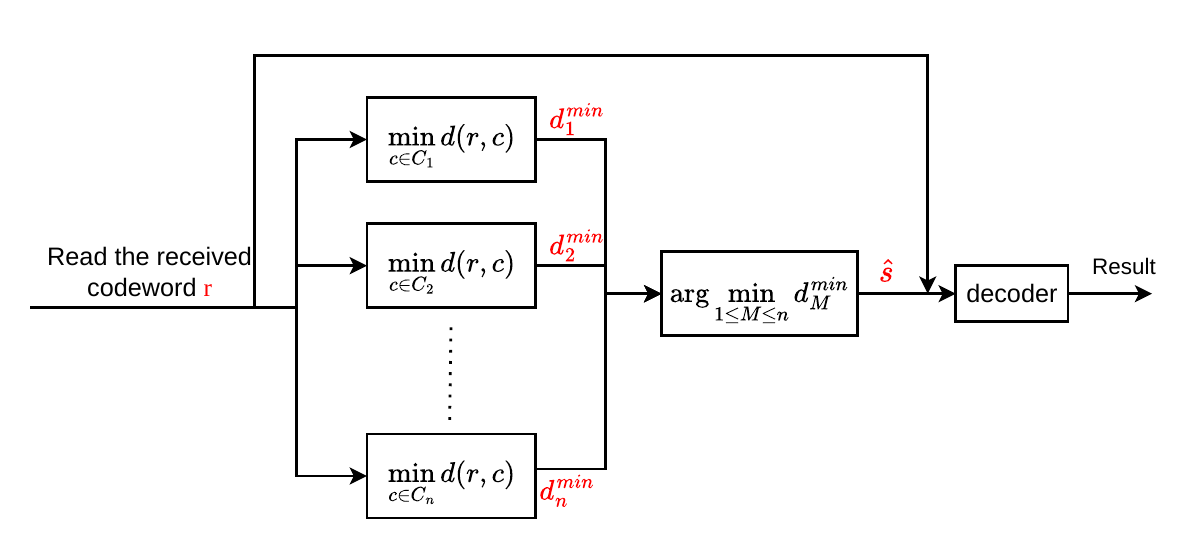}
    \caption{Decoding process}
        \vspace{-3mm}
    \label{fig:Decoding process}
\end{figure}

\vspace{-3mm}

\section{Theoretical Analysis of Group Classification}
This section provides a theoretical analysis of the proposed layered UEP codebook. 
We show that, under the inter-level distance constraint introduced in Section~II-B, the nearest-group decoder can correctly identify the transmitted importance level when the number of bit errors is within the correction capability of that level. 
We also discuss the case where the error weight exceeds this correction capability and characterize which groups cannot be falsely selected.

Since the proposed scheme does not use an explicit protection tag, the receiver must infer the importance level directly from the received vector. 
A key question is whether channel-induced bit errors may move the received vector closer to another group, leading to an incorrect importance-level decision. 
The following theorem addresses this issue by showing that, under a sufficient inter-level distance condition, any received vector whose error weight is within the correction capability of the transmitted group remains closer to that group than to all other groups. 
Therefore, the nearest-group decoder can correctly identify the original importance level under bounded bit errors.

\begin{theorem}
Assume that a codeword \(c_i^{(A)} \in C_A\) is transmitted and that \(r\) is the corresponding hard-decision received vector. Let
$w = d(r,c_i^{(A)})$
denote the number of bit errors. If
$$w \leq t_A
\quad \text{and} \quad
d_{AB} \geq 2t_A+1, \quad \forall B \neq A,$$
then the nearest-group decoder correctly identifies the transmitted group, i.e., $$\hat{s}=A.$$
\end{theorem}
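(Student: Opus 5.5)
The argument is a triangle-inequality bound in Hamming space: we show that every codeword outside \(C_A\) is strictly farther from \(r\) than the transmitted codeword \(c_i^{(A)}\). Then \(d_A^{\min}\) is strictly smaller than every other \(d_B^{\min}\), and the nearest-group rule (\ref{eq:nearest_group_decoding}) returns \(\hat{s}=A\) with no tie to break.

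First, we bound the distance to the transmitted group from above. Since \(c_i^{(A)}\in C_A\), the definition (\ref{eq:group_min_distance}) gives
\[
d_A^{\min}\le d(r,c_i^{(A)}) = w \le t_A .
\]

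Second, we bound the distance to any other group from below. Fix \(B\neq A\) and an arbitrary \(v\in C_B\). The Hamming distance is a metric, so the triangle inequality gives \(d(c_i^{(A)},v)\le d(c_i^{(A)},r)+d(r,v)\). By the definition (\ref{def:interleveldis}) of the inter-level distance, \(d(c_i^{(A)},v)\ge d_{AB}\ge 2t_A+1\). Combining these,
\[
d(r,v)\ge 2t_A+1-w\ge 2t_A+1-t_A = t_A+1 .
\]
Because \(v\) was arbitrary, minimizing over \(C_B\) yields \(d_B^{\min}\ge t_A+1\).

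Third, we compare the two bounds: \(d_A^{\min}\le t_A< t_A+1\le d_B^{\min}\) for every \(B\neq A\). The minimizer in (\ref{eq:nearest_group_decoding}) is therefore unique and equals \(A\), so the tie-breaking rule is never invoked.

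No step is a genuine obstacle. The only point needing care is that the inequality must be strict to exclude ties. Strictness comes from the \(+1\) in the hypothesis \(d_{AB}\ge 2t_A+1\); it does not depend on the value of \(d_{\min}(C_A)\) itself.
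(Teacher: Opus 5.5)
Your proposal is correct and follows the paper's proof essentially step for step: the upper bound $d_A^{\min}\le w\le t_A$, the triangle-inequality lower bound $d_B^{\min}\ge d_{AB}-w\ge 2t_A+1-w\ge t_A+1$ for every $B\neq A$, and the resulting strict inequality that forces $\hat{s}=A$. Your explicit note that this strictness makes the tie-breaking rule unnecessary is a small, welcome addition; the paper only asserts this informally in Section~II.
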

\begin{proof}
Based on the nearest-group decoder, $d_x^{\min}$ is computed for each group \(C_x\) based on (\ref{eq:group_min_distance}).

First, consider the transmitted group \(C_A\). Since  \(c_i^{(A)} \in C_A\), the minimum distance from \(r\) to group \(C_A\) cannot be larger than the distance from \(r\) to \(c_i^{(A)}\). Therefore,
$$d_A^{\min}=\min_{c\in C_A} d(r,c)\leq d(r,c_i^{(A)})=w.$$
Since \(w\leq t_A\), we have 
\begin{equation}
d_A^{\min}\leq t_A .
\label{daminleata}
\end{equation}

Next, consider any other group \(C_B\), where \(B\neq A\). 
For any codeword \(c_j^{(B)}\in C_B\), the triangle inequality of the Hamming distance gives $$d(c_i^{(A)},c_j^{(B)})\leq d(c_i^{(A)},r)+d(r,c_j^{(B)}).$$
Because $d(c_i^{(A)},r)=w,$ we obtain
\begin{equation}
d(r,c_j^{(B)})\geq d(c_i^{(A)},c_j^{(B)})-w.
\label{qu:triangleGB}
\end{equation}

By the definition of the inter-level distance of (\ref{def:interleveldis}), any codeword in \(C_A\) and any codeword in \(C_B\) must satisfy $d(c_i^{(A)},c_j^{(B)})\geq d_{AB}.$
Using the assumed inter-level distance condition
$d_{AB}\geq 2t_A+1$ and \(w\leq t_A\),
we further obtain
\[
d(r,c_j^{(B)})
\geq
d_{AB}-w
\geq
2t_A+1-w \geq t_A+1.
\]
Therefore,
\[
d(r,c_j^{(B)})\geq t_A+1.
\]

This inequality holds for every \(c_j^{(B)}\in C_B\). Hence, the minimum distance from \(r\) to group \(C_B\) also satisfies
\begin{equation}
d_B^{\min} = \min_{c\in C_B}d(r,c) \geq t_A+1.
\label{daminta}
\end{equation}

Combining the results (\ref{daminleata}) and (\ref{daminta}), we have $
d_A^{\min}<d_B^{\min},
\qquad \forall B\neq A.$ Therefore, the received vector \(r\) is strictly closer to the transmitted group \(C_A\) than to any other group based on (\ref{eq:nearest_group_decoding}). 
Consequently, the nearest-group decoder correctly identifies the transmitted importance level:
$\hat{s}=A.$
This completes the proof.
\end{proof}

Theorem 1 shows that inter-level distance condition provides a sufficient guarantee for correct importance-level classification under bounded bit errors. Specifically, when the error weight does not exceed the correction capability of the transmitted group, the received vector remains strictly closer to the transmitted group than to any other group. 
This result confirms that the proposed tag-free codebook can support reliable group identification without relying on an explicit protection indicator.

When the error weight exceeds the correction capability of the transmitted group, correct classification is no longer guaranteed. 
However, it is still important to characterize which groups cannot be falsely selected by the nearest-group decoder. The following theorem analyzes this case and shows that \textit{groups with sufficiently large correction radii remain protected from false selection}.
\begin{theorem}
Assume that a codeword \(c_i^{(A)}\in C_A\) is transmitted and that \(r\) is the corresponding hard-decision received vector and the number of bit errors
$w>t_A$. 

For any group \(C_B\), \(B\neq A\), if
$$d_{AB}\geq 2t_B+1
\quad \text{and} \quad
t_B\geq w,
$$
then \(C_B\) cannot be selected by the nearest-group decoder. Consequently, if the nearest-group decoder misclassifies the received vector, the selected group \(C_{\hat{s}}\) must satisfy
$
t_{\hat{s}}<w .
$
\end{theorem}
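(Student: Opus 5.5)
The plan is to reuse the triangle-inequality argument from the proof of Theorem 1, but measure everything against the actual error weight $w$ instead of $t_A$. The goal is a strict gap between the distances from $r$ to $C_A$ and to $C_B$.

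The first step concerns the transmitted group. Exactly as in Theorem 1, since $c_i^{(A)}\in C_A$,
\[
d_A^{\min}=\min_{c\in C_A} d(r,c)\le d(r,c_i^{(A)})=w .
\]
The bound $d_A^{\min}\le t_A$ from Theorem 1 is no longer available, because now $w>t_A$, and it is not needed.

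The second step concerns a fixed group $C_B$ with $B\neq A$. For every $c_j^{(B)}\in C_B$, inequality (\ref{qu:triangleGB}) together with the definition (\ref{def:interleveldis}) gives
\[
d(r,c_j^{(B)})\ge d_{AB}-w .
\]
Now I will insert the hypotheses in order. From $d_{AB}\ge 2t_B+1$ we get $d(r,c_j^{(B)})\ge 2t_B+1-w$. Then $t_B\ge w$ gives $2t_B+1-w\ge t_B+1\ge w+1$. Minimizing over $C_B$ yields $d_B^{\min}\ge w+1>w\ge d_A^{\min}$. The inequality is strict, so $C_B$ is never the argmin in (\ref{eq:nearest_group_decoding}), and it is not even involved in a tie. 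Hence $\hat s\neq B$ regardless of the tie-breaking rule.

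For the ``consequently'' part I will argue by contraposition. Suppose the decoder misclassifies, so $\hat s\neq A$. By the first part, $\hat s$ cannot satisfy both $d_{A\hat s}\ge 2t_{\hat s}+1$ and $t_{\hat s}\ge w$. The codebook is built under the design constraint $d_{pq}\ge 2\max\{t_p,t_q\}+1$ of Section~II, which holds in particular under the stronger rule actually adopted. Hence $d_{A\hat s}\ge 2t_{\hat s}+1$ always holds, and so it must be that $t_{\hat s}<w$.

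The only delicate point I expect is this last step. The stated hypothesis of the theorem concerns a single group $B$, so the conclusion about $\hat s$ needs the global inter-level constraint of Section~II to supply $d_{A\hat s}\ge 2t_{\hat s}+1$. I will state this dependence explicitly rather than leave it implicit. Everything else is a short chain of inequalities.
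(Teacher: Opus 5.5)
Your proof is correct and matches the paper's argument step for step. You bound $d_A^{\min}\le w$, then use the triangle inequality with $d_{AB}\ge 2t_B+1$ and $t_B\ge w$ to get $d_B^{\min}\ge w+1$, which gives strict separation. You also explicitly invoke the Section~II constraint $d_{pq}\ge 2\max\{t_p,t_q\}+1$ for the ``consequently'' clause, which the paper's proof uses only implicitly when it passes from ``any group with $t_B\ge w$'' to $t_{\hat s}<w$.
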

\begin{proof}
Consider any group \(C_B\), \(B\neq A\), satisfying \(t_B\geq w\). 
Since \(c_i^{(A)}\in C_A\), the minimum distance from \(r\) to the transmitted group satisfies
\[
d_A^{\min}
=
\min_{c\in C_A}d(r,c)
\leq d(r,c_i^{(A)})
=
w .
\]

For any codeword \(c_j^{(B)}\in C_B\), using the same triangle-inequality argument as in the proof of Theorem~1, we have
$
d(r,c_j^{(B)})
\geq
d(c_i^{(A)},c_j^{(B)})-w .
$

By the definition of the inter-level distance,
$d(c_i^{(A)},c_j^{(B)})\geq d_{AB}.$
Together with the assumed condition \(d_{AB}\geq 2t_B+1\), this gives
$
d(r,c_j^{(B)})
\geq
d_{AB}-w
\geq
2t_B+1-w .
$
Since \(t_B\geq w\), we have
$
2t_B+1-w \geq w+1.
$
Therefore,
$
d(r,c_j^{(B)})\geq w+1 .
$

Since this holds for every \(c_j^{(B)}\in C_B\), the minimum distance from \(r\) to group \(C_B\) satisfies
\[
d_B^{\min}
=
\min_{c\in C_B}d(r,c)
\geq w+1 .
\]
Combining this with \(d_A^{\min}\leq w\), we obtain
$
d_A^{\min}<d_B^{\min}.
$
Thus, \(C_B\) cannot be selected by the nearest-group decoder.

Since the above argument applies to any group \(C_B\) with \(t_B\geq w\), no such group can be falsely selected. Consequently, if misclassification occurs, the selected group \(C_{\hat{s}}\) must satisfy
$
t_{\hat{s}}<w .
$
This completes the proof.
\end{proof}

Together, Theorem~1 and Theorem~2 justify the proposed inter-level distance design. Theorem~1 guarantees correct importance-level classification under bounded bit errors, while Theorem~2 limits the possible false-selection behavior when the error weight exceeds the correction capability of the transmitted group. These results provide the theoretical foundation for the proposed tag-free layered UEP codebook.

\section{Simulation and analysis}
In this proof-of-concept study, the layered codebook is obtained by exhaustive search over candidate binary codewords, where a candidate is accepted only if all prescribed intra-level and inter-level Hamming-distance constraints are satisfied.

We evaluate the proposed scheme and the baseline over additive white Gaussian noise (AWGN) and visible light communication inter-symbol interference (VLC-ISI) channels\cite{Asaoka2026, Asaoka2025, Shi2025}. 
The VLC-ISI channel is modeled as a binary-input channel with adjacent-symbol interference. In this model, the received intensity of the current bit depends not only on the current transmitted bit but also on the optical leakage from neighboring bits. Following the simplified VLC-ISI model in~\cite{Asaoka2026, Shi2025}, the received intensity is expressed as
$$
y(t)=h x(t-1)+x(t)+h x(t+1), \qquad x(t)\in\{0,1\},$$
where \(h\) denotes the interference coefficient and \(y(t)\) represents the received luminance intensity before hard decision.

For comparison, we use a tag-based BCH scheme as the baseline~\cite{mark-UEP}.
For fairness, both the proposed scheme and the tag-based BCH baseline use the same total codeword length of 45 bits, the same number of importance levels, and the same group sizes. In the baseline, each codeword consists of a protection-level indicator and an error-correcting codeword. The indicator specifies the error-correction capability used for decoding. Under a fixed total codeword length of 45 bits, each baseline codeword is composed of a 14-bit indicator concatenated with a 31-bit BCH codeword. 

To evaluate different priority requirements, we define six importance levels, denoted by Level A to Level F in increasing order of importance. For clarity in the simulation figures, three representative levels, namely Level A, Level D, and Level F, are selected for comparison.

\vspace{-3mm}
\subsection{BER performance}
The BER performance over the AWGN and VLC-ISI channels is shown in Fig.~\ref{fig2} and Fig.~\ref{fig3}, respectively. For visual clarity, three representative importance levels, namely Levels A, D, and F, are selected for comparison.

Fig.~\ref{fig2} shows the BER performance over the AWGN channel. The proposed scheme consistently outperforms the tag-based "indicator + ECC" baseline over the entire SNR range. At a BER of \(10^{-4}\), the proposed scheme achieves an SNR gain of approximately \(4\) dB for Group F and approximately \(3\) dB for Group D compared with the baseline.

Fig.~\ref{fig3} shows the BER performance over the VLC-ISI channel. As the interference coefficient \(h\) increases, the proposed scheme maintains a lower BER than the tag-based baseline, demonstrating stronger robustness against inter-symbol interference. Although the performance gap between the high-priority groups is relatively small, the proposed scheme provides a clear improvement for the lower-priority groups.

\begin{figure}[!h]
    \centering
    \includegraphics[width=0.85\linewidth]{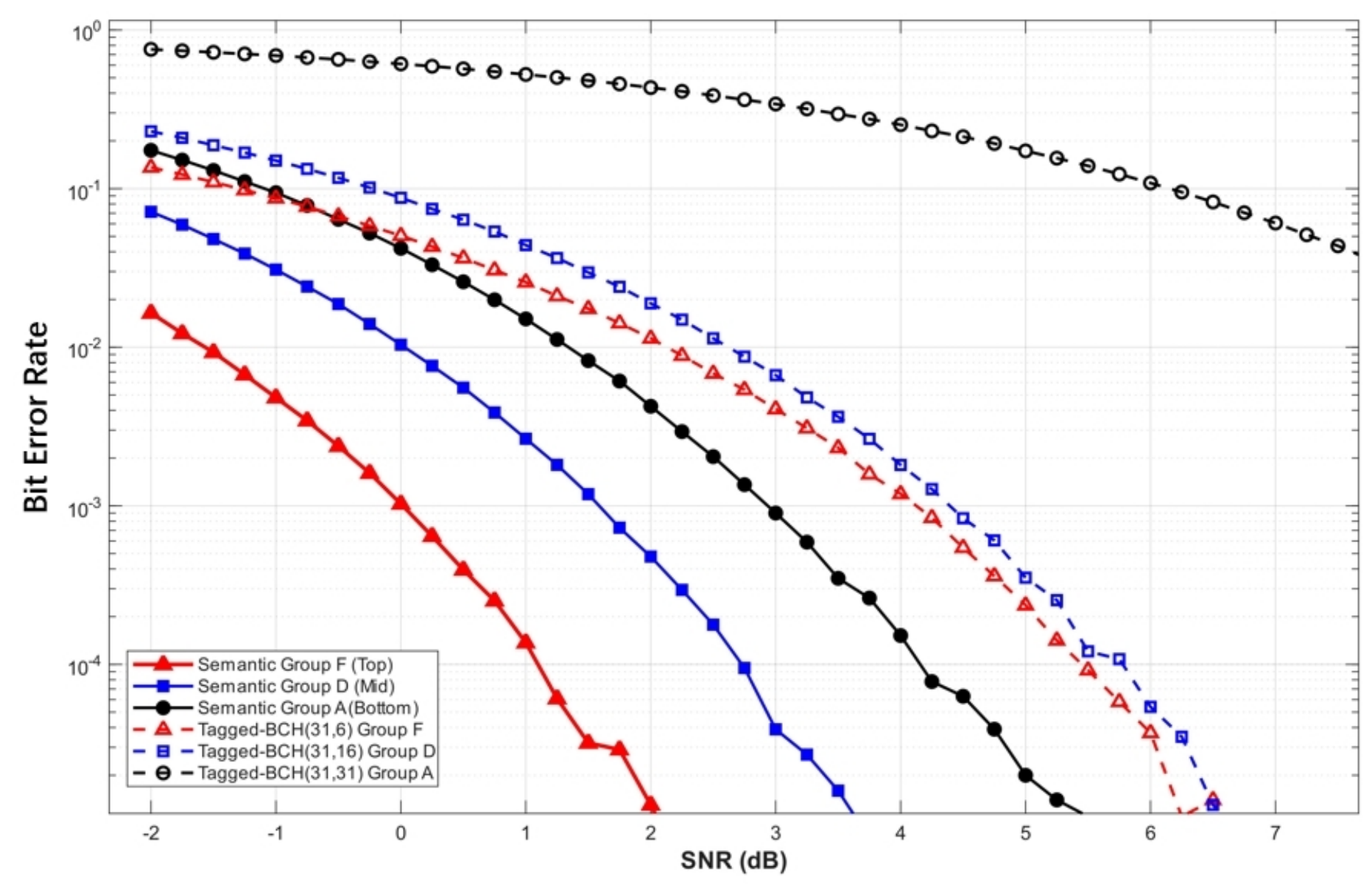}
        \vspace{-3mm}
    \caption{BER of AWGN Channel} 
    \label{fig2}
    \includegraphics[width=0.85\linewidth]{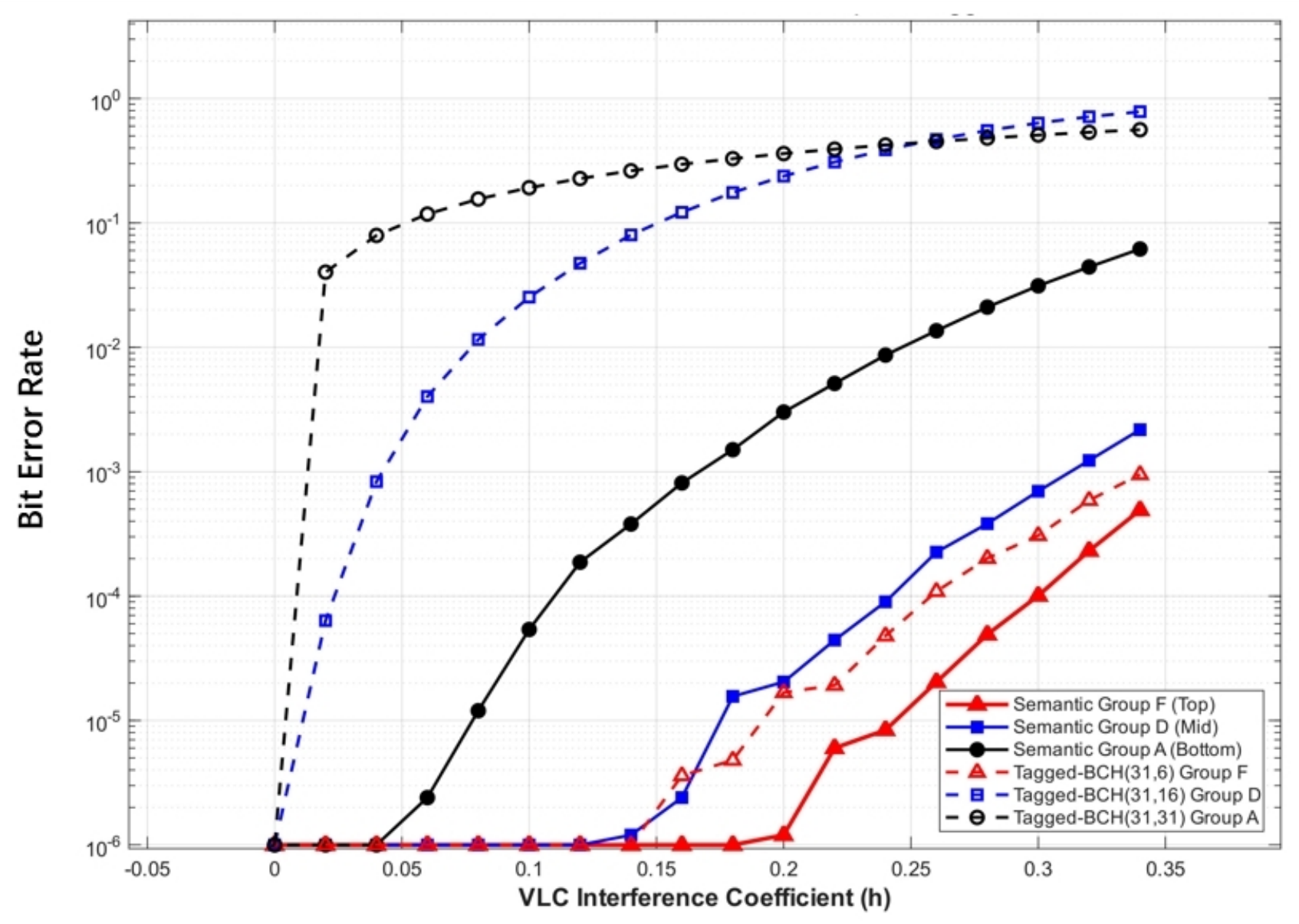}
        \vspace{-3mm}
    \caption{BER of VLC Channel} 
    \label{fig3}
    \vspace{-5mm}
\end{figure}

\subsection{Group Classification Accuracy}
In addition to the BER, we evaluate the reliability of importance-level classification using Group Classification Accuracy (GCA). GCA is defined as the probability that the receiver correctly assigns a received codeword to its original importance level:
$\mathrm{GCA} = \Pr(\hat{s}=s),$ where \(s\) denotes the transmitted importance level and \(\hat{s}\) denotes the level estimated by the receiver.

In the proposed layered UEP scheme, correct importance-level classification is essential because the receiver applies the corresponding error-correction capability according to the estimated level. Therefore, GCA measures whether the proposed Hamming-distance-based codebook can preserve the intended priority structure under channel noise. For a fair comparison, the baseline is configured with the same number of importance levels, group sizes, and error-correction capabilities as the proposed scheme.

The GCA performance over the AWGN and VLC-ISI channels is shown in Fig.~\ref{fig5} and Fig.~\ref{fig6}, respectively. In the AWGN channel, both the proposed scheme and the tag-based baseline achieve high classification accuracy at large SNR values. However, in the low-SNR region, the proposed scheme provides better classification performance than the tag-based approach. The improvement is especially clear for the highest-importance level, where the performance gap reaches approximately \(10\%\).

In the VLC channel, the GCA is evaluated as the interference coefficient \(h\) increases. The proposed scheme maintains stable classification accuracy for the selected importance levels, showing its robustness against inter-symbol interference. In contrast, the tag-based baseline suffers from a noticeable degradation, especially for the medium-importance level. This degradation occurs because errors in the tag may cause the receiver to select an incorrect protection level. These results show that embedding the protection structure directly into the codebook is more reliable than using an explicit tag, particularly in short-blocklength and interference-dominated VLC scenarios.

\begin{figure}[!h]
    \centering
    \includegraphics[width=0.7\linewidth]{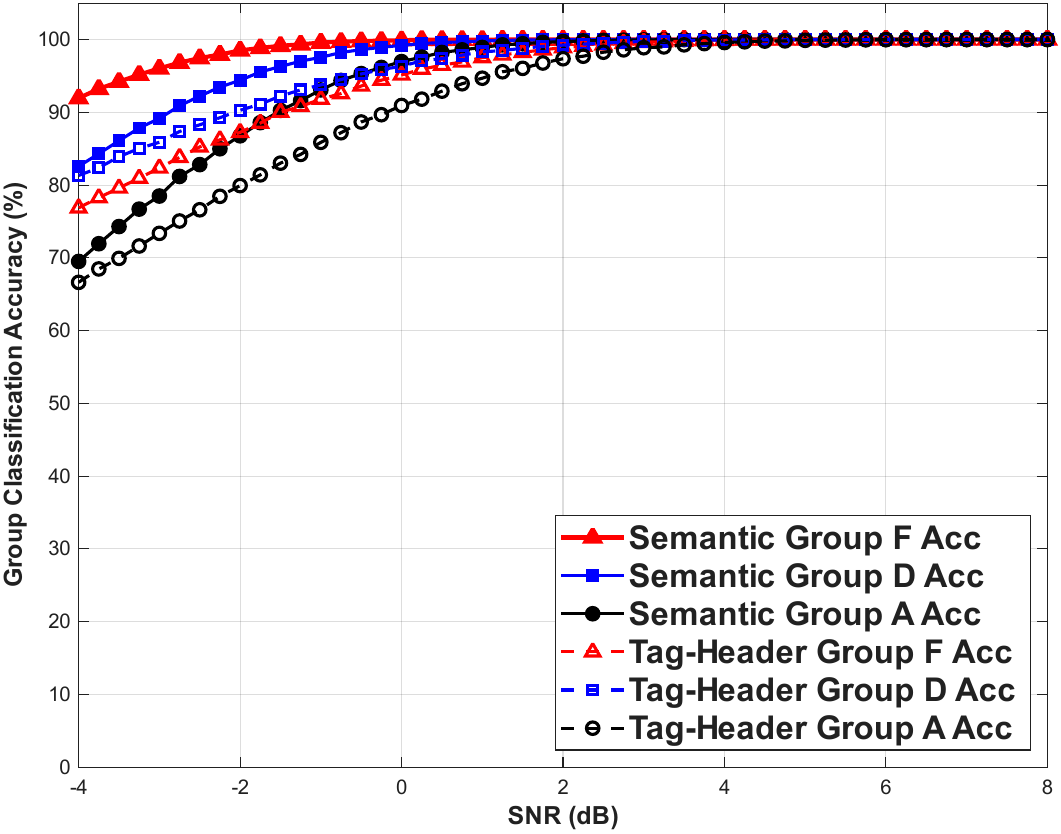}
        \vspace{-3mm}
    \caption{Group classification accuracy over the AWGN channel.} 
    \label{fig5}
    \includegraphics[width=0.7\linewidth]{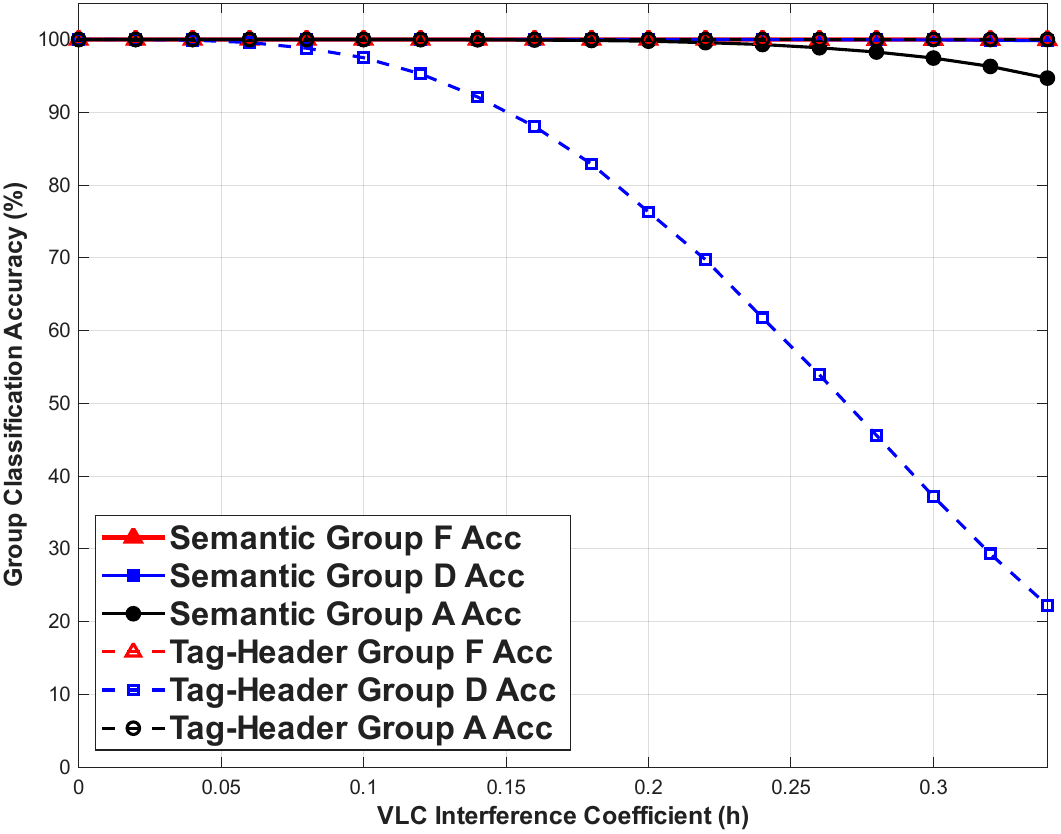}
    \caption{Group classification accuracy over the VLC-ISI channel.} 
        \vspace{-4mm}
    \label{fig6}
\end{figure}

\vspace{-4mm}
\section{Conclusion}
This paper proposed a tag-free layered construction of message-wise UEP codes with short-blocklength. By controlling the intra-level and inter-level Hamming distances of the codebook, the proposed scheme provides different error-correction capabilities for messages with different importance levels while enabling reliable importance-level classification at the receiver. 
We derived sufficient conditions for correct group classification under bounded errors and showed that groups with correction radii not smaller than the error weight cannot be falsely selected when the error weight exceeds the correction capability of the transmitted group. 
Simulation results over AWGN and VLC-ISI channels demonstrated that the proposed scheme achieves lower BER and higher group classification accuracy than the tag-based BCH baseline, particularly in interference-dominated VLC scenarios. 
Future work will focus on efficient codebook construction and low-complexity decoding algorithms for practical implementation.
    \vspace{-3mm}
\section*{Acknowledgment}
This work was partly supported by the supported by JSPS KAKENHI Grant Number 25K00371 and Telecommunications Advancement Foundation, Japan.

\bibliographystyle{unsrt}
\bibliography{sample}

@INPROCEEDINGS{shi2025,
  author={Shi, Tianhao and Lu, Shan and Yamazato, Takaya and Kuo, Yu-Hao and Ueng, Yeong-Luh},
  booktitle={2025 30th Asia-Pacific Conference on Communications (APCC)}, 
  title={Robust Detection of Overlapping LED Spots in Dense VLC via Pilot-Aided Geometric Recognition}, 
  year={2025},
  volume={},
  number={},
  pages={1-6},
  doi={10.23919/APCC64555.2025.11279960}}

@INPROCEEDINGS{Asaoka2025,
  author={Asaoka, Shin and Lu, Shan and Tang, Zhengqiang and Yamazato, Takaya},
  booktitle={2025 IEICE-CS International Conference on Emerging Technologies for Communications (ICETCCOM)}, 
  title={Theoretical Performance Analysis for Image Sensor Communication using a Propeller LED Transmitter}, 
  year={2025},
  volume={},
  number={},
  pages={1-4},
  doi={10.23919/ICETCCOM67395.2025.11416822}}

@ARTICLE{Asaoka2026,
  author={Asaoka, Shin and Lu, Shan and Tang, Zhengqiang and Yamazato, Takaya},
  journal={IEEE Communications Letters}, 
  title={ISI Modeling and BER Performance for Rotating Light-Trail Image Sensor Communication}, 
  year={2026},
  volume={},
  number={},
  pages={1-1},
  doi={10.1109/LCOMM.2026.3691874}}

@ARTICLE{Hagenauer1988,
  author={Hagenauer, J.},
  journal={IEEE Transactions on Communications}, 
  title={Rate-compatible punctured convolutional codes (RCPC codes) and their applications}, 
  year={1988},
  volume={36},
  number={4},
  pages={389-400},
  doi={10.1109/26.2763}}

@ARTICLE{Hou2016,
  author={Hou, Wei and Lu, Shan and Cheng, Jun},
  journal={IET Communications}, 
  title={Rate-compatible spatially coupled LDPC code ensembles based on repeat-accumulate extensions}, 
  year={2016},
  volume={10},
  number={17},
  pages={2422-2426},
  doi={10.1109/LCOMM.2015.2497242}}

@INPROCEEDINGS{Hou2014,
  author={Hou, Wei and Lu, Shan and Cheng, Jun},
  booktitle={2014 8th International Symposium on Turbo Codes and Iterative Information Processing (ISTC)}, 
  title={Rate-compatible spatially coupled LDPC codes via repeat-accumulation extension}, 
  year={2014},
  volume={},
  number={},
  pages={87-91},
  doi={10.1109/ISTC.2014.6955091}}

@INPROCEEDINGS{10571004,
  author={Fayaz, Nargis and Shreshtha, Aman and Sarangi, Smruti and Mallik, Ranjan K. and Lall, Brejesh},
  booktitle={2024 IEEE Wireless Communications and Networking Conference (WCNC)}, 
  title={Semantic-Aided Image Transmission System with Unequal Error Protection for Next-Generation Communication Networks}, 
  year={2024},
  volume={},
  number={},
  pages={01-06},
  doi={10.1109/WCNC57260.2024.10571004}}

@ARTICLE{mark-UEP,
  author={Shkel, Yanina Y. and Tan, Vincent Y. F. and Draper, Stark C.},
  journal={IEEE Transactions on Information Theory}, 
  title={Unequal Message Protection: Asymptotic and Non-Asymptotic Tradeoffs}, 
  year={2015},
  volume={61},
  number={10},
  pages={5396-5416},
  doi={10.1109/TIT.2015.2462846}}

@ARTICLE{9679803,
  author={Luo, Xuewen and Chen, Hsiao-Hwa and Guo, Qing},
  journal={IEEE Wireless Communications}, 
  title={Semantic Communications: Overview, Open Issues, and Future Research Directions}, 
  year={2022},
  volume={29},
  number={1},
  pages={210-219},
  doi={10.1109/MWC.101.2100269}}

@ARTICLE{5715839,
  author={Gong, Chen and Yue, Guosen and Wang, Xiaodong},
  journal={IEEE Transactions on Communications}, 
  title={Message-Wise Unequal Error Protection Based on Low-Density Parity-Check Codes}, 
  year={2011},
  volume={59},
  number={4},
  pages={1019-1030},
  doi={10.1109/TCOMM.2011.020411.090611}}

@ARTICLE{5319734,
  author={Borade, Shashi and Nakiboğlu, Bariş and Zheng, Lizhong},
  journal={IEEE Transactions on Information Theory}, 
  title={Unequal Error Protection: An Information-Theoretic Perspective}, 
  year={2009},
  volume={55},
  number={12},
  pages={5511-5539},
  doi={10.1109/TIT.2009.2032819}}

@ARTICLE{9525108,
  author={Ninkovic, Vukan and Vukobratovic, Dejan and Häger, Christian and Wymeersch, Henk and Graell i Amat, Alexandre},
  journal={IEEE Communications Letters}, 
  title={Autoencoder-Based Unequal Error Protection Codes}, 
  year={2021},
  volume={25},
  number={11},
  pages={3575-3579},
  doi={10.1109/LCOMM.2021.3108845}}

@ARTICLE{6978587,
  author={Namba, Kazuteru and Lombardi, Fabrizio},
  journal={IEEE Transactions on Computers}, 
  title={Parallel Decodable Two-Level Unequal Burst Error Correcting Codes}, 
  year={2015},
  volume={64},
  number={10},
  pages={2902-2911},
  doi={10.1109/TC.2014.2378290}}

@INPROCEEDINGS{699346,
  author={Buch, G. and Burkert, F.},
  booktitle={MELECON '98. 9th Mediterranean Electrotechnical Conference. Proceedings (Cat. No.98CH36056)}, 
  title={Non-linear codes and concatenated codes for unequal error protection}, 
  year={1998},
  volume={2},
  number={},
  pages={851-855 vol.2},
  doi={10.1109/MELCON.1998.699346}}

@INPROCEEDINGS{7848728,
  author={Barmada, Bashar and Rehman, Saeed},
  booktitle={2016 IEEE Region 10 Conference (TENCON)}, 
  title={From source coding to MIMO - a multi-level unequal error protection}, 
  year={2016},
  volume={},
  number={},
  pages={3597-3600},
  doi={10.1109/TENCON.2016.7848728}}

@ARTICLE{7334457,
  author={Tang, Hoyoung and Park, Jongsun},
  journal={IEEE Transactions on Very Large Scale Integration (VLSI) Systems}, 
  title={Unequal-Error-Protection Error Correction Codes for the Embedded Memories in Digital Signal Processors}, 
  year={2016},
  volume={24},
  number={6},
  pages={2397-2401},
  doi={10.1109/TVLSI.2015.2497368}}

@ARTICLE{yamazato,
  author={Yamazato, Takaya and Takai, Isamu and Okada, Hiraku and Fujii, Toshiaki and Yendo, Tomohiro and Arai, Shintaro and Andoh, Michinori and Harada, Tomohisa and Yasutomi, Keita and Kagawa, Keiichiro and Kawahito, Shoji},
  journal={IEEE Communications Magazine}, 
  title={Image-sensor-based visible light communication for automotive applications}, 
  year={2014},
  volume={52},
  number={7},
  pages={88-97},
  doi={10.1109/MCOM.2014.6852088}}

\end{document}